\documentclass[reqno,openright,dvips,11pt,myheadings,twoside]{amsart}

\usepackage{ifthen}



\usepackage[top=1.4in, bottom=1.4in, left=1.75in, right=1.75in]{geometry} 


\newcounter{arXivFormat} \setcounter{arXivFormat}{1}

\newcommand{\IfarXivElse}[2]{
    \ifthenelse{\value{arXivFormat}=1}
        {#1}{#2}
    }

%
\newcounter{dtlForSubmission} \setcounter{dtlForSubmission}{0}
\newcounter{dtlMarginComments} \setcounter{dtlMarginComments}{1}
\newcounter{dtlSomeDetail} \setcounter{dtlSomeDetail}{2}
\newcounter{dtlFullDetails} \setcounter{dtlFullDetails}{3}

%
%

\newcounter{DetailLevel} \setcounter{DetailLevel}{\value{dtlForSubmission}}






\newcommand{\DetailMarginNote}[1]{
    \ifthenelse{\value{DetailLevel}=\value{dtlMarginComments} \or \value{DetailLevel}>\value{dtlMarginComments}}
        {{\small #1}}{}
    }

\newcommand{\DetailSome}[1]{
    \ifthenelse{\value{DetailLevel}=\value{dtlSomeDetail} \or \value{DetailLevel}>\value{dtlSomeDetail}}
        {{\small \textbf{Detailed compile only}: #1}}{}
    }

\newcommand{\DetailFull}[1]{
    \ifthenelse{\value{DetailLevel}=\value{dtlFullDetails} \or \value{DetailLevel}>\value{dtlFullDetails}}
        {{\small \textbf{Detailed compile only}: #1}}{}
    }

\newcommand{\NotDetailSome}[1]{
    \ifthenelse{\value{DetailLevel}=\value{dtlSomeDetail} \or \value{DetailLevel}>\value{dtlSomeDetail}}
        {}{#1}
    }

\newcommand{\NotDetailFull}[1]{
    \ifthenelse{\value{DetailLevel}=\value{dtlFullDetails} \or \value{DetailLevel}>\value{dtlFullDetails}}
        {}{#1}
    }

\newcommand{\DetailSomeElse}[2]{
    \ifthenelse{\value{DetailLevel}=\value{dtlSomeDetail} \or \value{DetailLevel}>\value{dtlSomeDetail}}
        {{\small \textbf{Detailed compile only}: #1}}{#2}
    }

\newcommand{\DetailFullElse}[2]{
    \ifthenelse{\value{DetailLevel}=\value{dtlFullDetails} \or \value{DetailLevel}>\value{dtlFullDetails}}
        {{\small \textbf{Detailed compile only}: #1}}{#2}
    }

%
%
\newcommand{\DetailSomeInline}[1]{
    \ifthenelse{\value{DetailLevel}=\value{dtlSomeDetail} \or \value{DetailLevel}>\value{dtlSomeDetail}}
        {{\small #1}}{}
    }

\newcommand{\DetailFullInline}[1]{
    \ifthenelse{\value{DetailLevel}=\value{dtlFullDetails} \or \value{DetailLevel}>\value{dtlFullDetails}}
        {{\small #1}}{}
    }

\newcommand{\DetailSomeElseInline}[2]{
    \ifthenelse{\value{DetailLevel}=\value{dtlSomeDetail} \or \value{DetailLevel}>\value{dtlSomeDetail}}
        {{\small #1}}{#2}
    }

\newcommand{\DetailFullElseInline}[2]{
    \ifthenelse{\value{DetailLevel}=\value{dtlFullD	etails} \or \value{DetailLevel}>\value{dtlFullDetails}}
        {{\small #1}}{#2}
    }

\newcommand{\ExplainDetailLevel}{
    Detail level is
    \ifthenelse{\value{DetailLevel}=\value{dtlForSubmission}}
        {0: for submission}
        {\ifthenelse{\value{DetailLevel}=\value{dtlMarginComments}}
            {1: as for submission but with margin comments}
            {\ifthenelse{\value{DetailLevel}=\value{dtlSomeDetail}}
                {2: some proofs not intended for submission}
               {\ifthenelse{\value{DetailLevel}=\value{dtlFullDetails}}
                   {3: full details}
                   {invalid}
                }
            }
        }
    }

\newtheorem{theorem}{Theorem}[section]
\newtheorem{lemma}[theorem]{Lemma}

\newtheorem{cor}[theorem]{Corollary}

\theoremstyle{definition}

\numberwithin{equation}{section}

\newcommand{\abs}[1]{\left\vert#1\right\vert}

\newcommand{\innp}[1]{\ensuremath{\left< #1 \right>}}

\newcommand{\BoldTau}
    {\mbox{\boldmath $\tau$}}

%
%
\newcommand{\skipline}{\vspace{11pt}}


%
%
\newcommand{\BB}[1]{\ensuremath{\mathbb{#1}}}
\newcommand{\R}{\ensuremath{\BB{R}}} %
\newcommand{\iny}{\ensuremath{\infty}}
\newcommand{\grad}{\ensuremath{\nabla}}
\DeclareMathOperator{\dv}{div} %
\DeclareMathOperator{\curl}{curl} %
\DeclareMathOperator{\BV}{BV} %

\newcommand{\prt}{\ensuremath{\partial}}
\newcommand{\brac}[1]{\ensuremath{\left[ #1 \right]}}

\newcommand{\set}[1]{\ensuremath{\left\{ #1 \right\}}}

\newcommand{\norm}[1]{\ensuremath{\left\Vert #1 \right\Vert}}
\newcommand{\smallnorm}[1]{\ensuremath{\Vert #1 \Vert}}
\newcommand{\refS}[1]{Section~\ref{S:#1}}
\newcommand{\refT}[1]{Theorem~\ref{T:#1}}
\newcommand{\refL}[1]{Lemma~\ref{L:#1}}

\newcommand{\refC}[1]{Corollary~\ref{C:#1}}
\newcommand{\refE}[1]{Equation~(\ref{e:#1})}

\newcommand{\eps}{\ensuremath{\epsilon}}
\newcommand{\Cal}[1]{\ensuremath{\mathcal{#1}}}
\newcommand{\al}{\ensuremath{\alpha}}

\newcommand{\ol}{\overline}

\begin{document}

\raggedbottom

\numberwithin{equation}{section}

%
%
\newcommand{\MarginNote}[1]{
    \ifthenelse{\value{DetailLevel}=\value{dtlMarginComments} \or
            \value{DetailLevel}>\value{dtlMarginComments}} {
        \marginpar{
            \begin{flushleft}
                \footnotesize #1
            \end{flushleft}
            }
        }
        {}
    }

%
%
\newcommand{\NoteToSelf}[1]{
    }

%
%
\newcommand{\Obsolete}[1]{
    }

%
%
\newcommand{\Tentative}[1]{
    }

\newcommand{\Detail}[1]{
    \MarginNote{Detail}
    \skipline
    \hspace{+0.25in}\fbox{\parbox{4.25in}{\small #1}}
    \skipline
    }

\newcommand{\Todo}[1]{
    \skipline \noindent \textbf{TODO}:
    #1
    \skipline
    }

\newcommand{\Comment}[1] {
    \skipline
    \hspace{+0.25in}\fbox{\parbox{4.25in}{\small \textbf{Comment}: #1}}
    \skipline
    }

\newcommand{\Ignore}[1]{}

%
%

\newcommand{\IntTR}
    {\int_{t_0}^{t_1} \int_{\R^d}}

\newcommand{\IntAll}
    {\int_{-\iny}^\iny}

\newcommand{\Schwartz}
    {\ensuremath \Cal{S}}

\newcommand{\SchwartzR}
    {\ensuremath \Schwartz (\R)}

\newcommand{\SchwartzRd}
    {\ensuremath \Schwartz (\R^d)}

\newcommand{\SchwartzDual}
    {\ensuremath \Cal{S}'}

\newcommand{\SchwartzRDual}
    {\ensuremath \Schwartz' (\R)}

\newcommand{\SchwartzRdDual}
    {\ensuremath \Schwartz' (\R^d)}

\newcommand{\HSNorm}[1]
    {\norm{#1}_{H^s(\R^2)}}

\newcommand{\HSNormA}[2]
    {\norm{#1}_{H^{#2}(\R^2)}}

\newcommand{\Holder}
    {H\"{o}lder }

\newcommand{\Holders}
    {H\"{o}lder's }

\newcommand{\Holderian}
    {H\"{o}lderian }

\newcommand{\HolderRNorm}[1]
    {\widetilde{\Vert}{#1}\Vert_r}

\newcommand{\LInfNorm}[1]
    {\norm{#1}_{L^\iny(\Omega)}}

\newcommand{\SmallLInfNorm}[1]
    {\smallnorm{#1}_{L^\iny}}

\newcommand{\LOneNorm}[1]
    {\norm{#1}_{L^1}}

\newcommand{\SmallLOneNorm}[1]
    {\smallnorm{#1}_{L^1}}

\newcommand{\LTwoNorm}[1]
    {\norm{#1}_{L^2(\Omega)}}

\newcommand{\SmallLTwoNorm}[1]
    {\smallnorm{#1}_{L^2}}

\newcommand{\LpNorm}[2]
    {\norm{#1}_{L^{#2}}}

\newcommand{\SmallLpNorm}[2]
    {\smallnorm{#1}_{L^{#2}}}

\newcommand{\lOneNorm}[1]
    {\norm{#1}_{l^1}}

\newcommand{\lTwoNorm}[1]
    {\norm{#1}_{l^2}}

\newcommand{\MsrNorm}[1]
    {\norm{#1}_{\Cal{M}}}

\newcommand{\FTF}
    {\Cal{F}}

\newcommand{\FTR}
    {\Cal{F}^{-1}}

\newcommand{\InvLaplacian}
    {\ensuremath{\widetilde{\Delta}^{-1}}}

\newcommand{\EqDef}
    {\hspace{0.2em}={\hspace{-1.2em}\raisebox{1.2ex}{\scriptsize def}}\hspace{0.2em}}

%
%

%
%

\title
    [Vanishing viscosity and vorticity on the boundary]
    {Vanishing viscosity and the accumulation of vorticity
    on the boundary}

\author{James P. Kelliher}
\address{Department of Mathematics, Brown University, Box 1917, Providence, RI
         02912}
\curraddr{Department of Mathematics, Brown University, Box 1917, Providence, RI
          02912}
\email{kelliher@math.brown.edu}

\subjclass[2000]{Primary 76D05, 76B99, 76D99} 

\date{} 


\keywords{Laplacian, Stokes operator, eigenvalue problems}

\begin{abstract}
We say that the vanishing viscosity limit holds in the classical sense if the velocity for a solution to the Navier-Stokes equations converges in the energy norm uniformly in time to the velocity for a solution to the Euler equations. We prove, for a bounded domain in dimension $2$ or higher, that the vanishing viscosity limit holds in the classical sense if and only if a vortex sheet forms on the boundary.
\end{abstract}

\maketitle

\IfarXivElse{}{
             Compiled on \textit{\textbf{\today}}
             }

\DetailMarginNote{
    \begin{small}
        \begin{flushright}
            Compiled on \textit{\textbf{\today}}

            \ExplainDetailLevel

        \end{flushright}
    \end{small}
    }


\noindent As is well known, for radially symmetric initial vorticity in a disk, the velocity for a solution to the Navier-Stokes equations converges in the energy norm uniformly in time to the velocity for a solution to the Euler equations. It was shown recently in \cite{FLMT2008} that for such initial data in a disk, it also happens that a vortex sheet forms on the boundary as the viscosity vanishes. By a vortex sheet, we mean a velocity field whose vorticity, as a finite Borel signed measure (an element of the dual space of $C(\ol{\Omega})$), is supported along a curve---the boundary, in this case.

It turns out that this phenomenon is in a sense more universal: either both types of limits hold or neither holds for an arbitrary bounded domain in $\R^d$, $d \ge 2$, with $C^2$-boundary and with no particular assumption on the symmetry of the initial data. More precisely, the vanishing viscosity limit in the classical sense (condition $(B)$ of \refS{VVNoSlip}) holds if and only if a vortex sheet of a particular type forms on the boundary (conditions $(E)$ and $(E_2)$ of \refS{VVNoSlip}). Now, however, the vortex sheet has vorticity belonging to the dual space of $H^1(\Omega)$ rather than $C(\ol{\Omega})$. We show this in \refT{MainResult} for no-slip boundary conditions and in \refT{MainResultb} for characteristic boundary conditions on the velocity.

%
%
\section{Introduction}\label{S:Introduction}

\noindent Let $\Omega$ be a bounded domain in $\R^d$, $d \ge 2$, with $C^2$-boundary $\Gamma$, and let $\mathbf{n}$ be the outward unit normal vector to $\Gamma$. A classical solution
$(\ol{u}, \ol{p})$ to the Euler equations satisfies
\begin{align*}
    \begin{matrix}
        (EE) & \left\{
            \begin{array}{l}
                \prt_t \ol{u} + \ol{u} \cdot \grad \ol{u} + \grad \ol{p} =
                \ol{f}
                        \text{ and }
                    \dv \ol{u} = 0 \text{ on } [0, T] \times \Omega, \\
                \ol{u}\cdot \mathbf{n} = 0 \text { on } [0, T] \times \Gamma
                \text{ and } \ol{u} = \ol{u}^0 \text{ on } \set{0} \times
                \Omega.
            \end{array}
            \right.
    \end{matrix}
\end{align*}
These equations describe the motion of an incompressible fluid of constant density and zero viscosity. The initial velocity $\ol{u}^0$ must at least lie in
\begin{align*}
    H &= \set{u \in (L^2(\Omega))^d: \dv u = 0 \text{ in } \Omega, \,
                u \cdot \mathbf{n} = 0 \text{ on } \Gamma}
\end{align*}
endowed with the $L^2$-norm, which, along with
\begin{align*}                
    V &= \set{u \in (H^1(\Omega))^d: \dv u = 0 \text{ in } \Omega, \,
                u = 0 \text{ on } \Gamma}
\end{align*}
endowed with the $H^1$-norm, are the classical spaces of fluid mechanics.

We assume that $\ol{u}^0$ is in $C^{k + \eps}(\Omega) \cap H$, $\eps > 0$, where $k =
1$ for two dimensions and $k = 2$ for 3 and higher dimensions, and that
$\ol{f}$ is in $C^1([0, t] \times \Omega)$ for all $t > 0$. Then as shown in
\cite{Koch2002} (Theorem 1 and the remarks on p. 508-509), there is some $T
> 0$ for which there exists a unique solution,
\begin{align}\label{e:ubarSmoothness}
    \ol{u}
        \text{ in } C^1([0, T]; C^{k + \eps}(\Omega)),
\end{align}
to ($EE$). In two dimensions, $T$ can be arbitrarily large, though it is only known that
some nonzero $T$ exists in three and higher dimensions.

The Navier-Stokes equations describe the motion of an incompressible fluid of
constant density and positive viscosity $\nu$. A classical solution to the
Navier-Stokes equations with no-slip boundary conditions can be defined in analogy to ($EE$) by
\begin{align*}
    \begin{matrix}
        (NS) & \left\{
            \begin{array}{l}
                \prt_t u + u \cdot \grad u + \grad p = \nu \Delta u + f
                        \text{ and }
                    \dv u = 0 \text{ on } [0, T] \times \Omega, \\
                u = 0 \text { on } [0, T] \times \Gamma
                \text{ and } u = u_\nu^0 \text{ on } \set{0} \times \Omega,
            \end{array}
            \right.
    \end{matrix}
\end{align*}
where $u^0_\nu$ is in $H$ and $f$ is in $L^1([0, T]; L^2(\Omega))$. We will work, however, with weak solutions to the Navier-Stokes equations. (See, for instance, Chapter III of \cite{T2001}.) Such weak solutions lie in $L^\iny([0, T]; H) \cap L^2([0, T]; V)$.

In \refS{VVNoSlip} we prove various equivalent forms of the vanishing viscosity limit for no-slip boundary conditions, including the formation of a vortex sheet on the boundary, and remark briefly on their derivation in \refS{Remarks}. In \refS{VVDirichletBCs} we extend the results of Section 2 to characteristic boundary conditions. We discuss, in \refS{FLMT}, our results in relation to those in \cite{FLMT2008} on vortex sheet formation for a disk. Finally, in \refS{Appendix}, we include some technical lemmas employed in \refS{VVNoSlip}.

%
%
\section{Equivalent forms of the vanishing viscosity limit}\label{S:VVNoSlip}

\noindent Let $\ol{u}$ be a classical solution to ($EE$) in $\Omega$ and $u = u^\nu$ be a weak solution to ($NS$) in $\Omega$ as in \refS{Introduction}, and assume that $u^0_\nu \to \ol{u}^0$ in $H$ and $f \to \ol{f}$ in $L^1([0, T]; L^2(\Omega))$ as $\nu \to 0$.

Let $\gamma_\mathbf{n}$ be the boundary trace operator for the normal component of a vector field (see \refL{IBP}). Let $\Cal{M}(\Omega)$ be the space of finite Borel signed measures on $\ol{\Omega}$---$\Cal{M}(\Omega)$ is the dual space of $C(\ol{\Omega})$. Let $\mu$ in $\Cal{M}(\ol{\Omega})$ be the measure supported on $\Gamma$ for which  $\mu\vert_\Gamma$ corresponds to Lebesgue measure on $\Gamma$ (arc length for $d = 2$, area for $d = 3$, etc.). Then $\mu$ is also a member of $H^1(\Omega)'$.

We define the vorticity $\omega(u)$ to be the $d \times d$ antisymmetric matrix
\begin{align}\label{e:VorticityRd}
    \omega(u) = \frac{1}{2}\brac{\grad u - (\grad u)^T}.
\end{align}
When working specifically in two dimensions, we can alternately define the vorticity as the scalar curl of $u$: 
\begin{align}\label{e:VorticityR2}
	\omega(u) = \prt_1 u^2 - \prt_2 u^1.
\end{align}

Letting $\omega = \omega(u)$ and $\ol{\omega} = \omega(\ol{u})$, we define the following conditions:
\begin{align*}
	(A) & \qquad u \to \ol{u} \text{ weakly in } H
					\text{ uniformly on } [0, T], \\
	(A') & \qquad u \to \ol{u} \text{ weakly in } (L^2(\Omega))^d
					\text{ uniformly on } [0, T], \\
	(B) & \qquad u \to \ol{u} \text{ in } L^\iny([0, T]; H), \\
	(C) & \qquad \grad u \to \grad \ol{u} - \innp{\gamma_\mathbf{n} \cdot, \ol{u} \mu} 
				\text{ in } ((H^1(\Omega))^{d \times d})'
					   \text{ uniformly on } [0, T], \\
	(D) & \qquad \grad u \to \grad \ol{u} \text{ in } (H^{-1}(\Omega))^{d \times d}
					   \text{ uniformly on } [0, T], \\
	(E) & \qquad \omega \to \ol{\omega}
					- \frac{1}{2} \innp{\gamma_\mathbf{n} (\cdot - \cdot^T),
								\ol{u} \mu}
					\text{ in } 
	 				((H^1(\Omega))^{d \times d})'
	 				   \text{ uniformly on } [0, T].
\end{align*}

In conditions $(C)$, $(D)$, and $(E)$, the convergence is in the weak$^*$ topology of the given spaces. In $(C)$ and $(E)$, $((H^1(\Omega))^{d \times d})'$ is the dual space of $(H^1(\Omega))^{d \times d}$; in $(D)$, $(H^{-1}(\Omega))^{d \times d}$ is the dual space of $(H^1_0(\Omega))^{d \times d}$. Thus, condition ($C$) means that
\begin{align*}
	(\grad u(t), M)
		\to (\grad \ol{u}(t), M) - \int_{\Gamma} (M \cdot \mathbf{n}) \cdot \ol{u}(t)
			\text{ in } L^\iny([0, T])
\end{align*}
for any $M$ in $(H^1(\Omega))^{d \times d}$, condition ($D$) means that
\begin{align*}
	(\grad u(t), M)
		\to (\grad \ol{u}(t), M)
			\text{ in } L^\iny([0, T])
\end{align*}
for any $M$ in $(H^1_0(\Omega))^{d \times d}$, and condition $(E)$ means that
\begin{align*}
	(\omega(t), M)
		\to (\ol{\omega}(t), M) - \frac{1}{2}\int_{\Gamma}
			((M - M^T) \cdot \mathbf{n}) \cdot \ol{u}(t)
				\text{ in } L^\iny([0, T])
\end{align*}
for any $M$ in $(H^1(\Omega))^{d \times d}$.

In two dimensions, defining the vorticity as in \refE{VorticityR2}, we also define the following two conditions:
\begin{align*}
	(E_2) & \qquad \omega \to \ol{\omega} - (\ol{u} \cdot \BoldTau) \mu 
				\text{ in } (H^1(\Omega))'
					   \text{ uniformly on } [0, T], \\
	(F_2) & \qquad \omega \to \ol{\omega} \text{ in } H^{-1}(\Omega)
					   \text{ uniformly on } [0, T].
\end{align*}
Here, $\BoldTau$ is the unit tangent vector on $\Gamma$ that is obtained by rotating the outward unit normal vector $\mathbf{n}$ counterclockwise by $90$ degrees.

Condition ($E_2$) means that
\begin{align*}
	(\omega(t), f)
		\to (\ol{\omega}(t), f) - \int_{\Gamma} (\ol{u}(t) \cdot \BoldTau) f
			\text{ in } L^\iny([0, T])
\end{align*}
for any $f$ in $H^1(\Omega)$, while condition ($F_2$) means that
\begin{align*}
	(\omega(t), f)
		\to (\ol{\omega}(t), f)
			\text{ in } L^\iny([0, T])
\end{align*}
for any $f$ in $H_0^1(\Omega)$.

\begin{theorem}\label{T:MainResult}
	Conditions ($A$), ($A'$), ($B$), ($C$), ($D$), and ($E$) are equivalent.
	In two dimensions, conditions ($E_2$) and ($F_2$) are equivalent to the other conditions.
\end{theorem}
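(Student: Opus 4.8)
The plan is to prove the equivalences by establishing a cycle of implications, relying on the basic energy estimate for the Navier-Stokes equations as the analytic engine and on integration-by-parts identities (the operator $\gamma_\mathbf{n}$ and the lemma \refL{IBP}) to convert between the gradient/vorticity formulations and the boundary-measure terms. First I would observe that $(B)$ is the strongest of the $L^2$-type statements and $(A)$, $(A')$ are the weakest; since $\ol{u}$ is divergence-free and tangent to $\Gamma$, its $L^2$-orthogonal projection onto $H$ is itself, so testing against arbitrary $L^2$ vector fields versus against $H$ gives the same information, yielding $(A)\Leftrightarrow(A')$ immediately. The implications $(B)\Rightarrow(A)$ and, in the presence of the a priori bound $\sup_\nu \norm{u^\nu}_{L^\iny([0,T];H)} < \iny$ (which follows from the energy inequality together with $u^0_\nu\to\ol u^0$ and $f\to\ol f$), the converse direction $(A)\Rightarrow(B)$ is the real content: one uses the energy balance for $u^\nu$ and for $\ol u$ together to show that weak convergence plus convergence of norms upgrades to strong convergence. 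Concretely, one writes the evolution of $\tfrac12\norm{u^\nu - \ol u}_{L^2}^2$, integrates, and uses that the viscous dissipation term and the boundary terms arising from $\ol u$ not satisfying the no-slip condition can be controlled once one knows $(A)$; this is the classical Kato-type argument and I expect it to be the main obstacle, since it requires care with the boundary layer and with the fact that $\ol u$ is only a test function, not a Navier-Stokes solution.

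Next I would handle the gradient-level conditions. For any fixed $t$, the sequence $u^\nu(t)$ is bounded in $L^2$, hence $\grad u^\nu(t)$ is bounded in $H^{-1}$ and in $((H^1)^{d\times d})'$; the key point is to identify the limit. Testing $\grad u^\nu$ against a matrix field $M\in (H^1(\Omega))^{d\times d}$ and integrating by parts gives $(\grad u^\nu, M) = -(u^\nu, \dv M) + \int_\Gamma (M\cdot\mathbf n)\cdot u^\nu = -(u^\nu,\dv M)$ since $u^\nu = 0$ on $\Gamma$. For $\ol u$ the same computation gives $(\grad \ol u, M) = -(\ol u, \dv M) + \int_\Gamma (M\cdot\mathbf n)\cdot\ol u$. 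Therefore $(A')$ (weak $L^2$ convergence, uniform in $t$) forces $(\grad u^\nu, M) \to -(\ol u,\dv M) = (\grad\ol u, M) - \int_\Gamma(M\cdot\mathbf n)\cdot\ol u$ uniformly on $[0,T]$, which is exactly $(C)$; conversely $(C)$ tested against gradient matrices $M = \grad\phi$ with $\phi$ vanishing on $\Gamma$ recovers weak $L^2$ convergence after using that such test functions are dense enough, giving $(C)\Rightarrow(A')$. Restricting to $M\in (H^1_0)^{d\times d}$ kills the boundary integral, so $(C)$ plainly implies $(D)$; and $(D)$ together with the uniform $L^2$ bound and a density/compactness argument gives back $(A')$, closing the loop. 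Condition $(E)$ is obtained from $(C)$ simply by antisymmetrizing the test matrix, i.e. applying $(C)$ with $\tfrac12(M - M^T)$ in place of $M$, and running the computation backwards recovers $(C)$ from $(E)$, so $(C)\Leftrightarrow(E)$; this also explains the precise form $\ol\omega - \tfrac12\innp{\gamma_\mathbf{n}(\cdot-\cdot^T),\ol u\mu}$ of the limit.

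Finally, in two dimensions the scalar vorticity $\omega = \prt_1 u^2 - \prt_2 u^1$ is just a particular linear combination of the entries of the antisymmetric part of $\grad u$, so $(E)$ with the matrix formulation specializes to a statement about pairing $\omega$ with scalar test functions $f\in H^1(\Omega)$: one takes $M$ to be $f$ times the fixed antisymmetric matrix $\left(\begin{smallmatrix}0 & 1\\ -1 & 0\end{smallmatrix}\right)$, computes $(M-M^T)\cdot\mathbf n = 2f\,(n_2, -n_1) = -2f\,\BoldTau$, and the boundary term $-\tfrac12\int_\Gamma((M-M^T)\cdot\mathbf n)\cdot\ol u$ becomes $\int_\Gamma (\ol u\cdot\BoldTau) f$, which is exactly condition $(E_2)$; since every scalar $f$ arises this way and the correspondence is a bijection onto the antisymmetric-matrix test fields of that special form, $(E)\Leftrightarrow(E_2)$. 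The relation $(F_2)$ is to $(E_2)$ as $(D)$ is to $(C)$: restricting $f$ to $H^1_0(\Omega)$ removes the boundary integral, so $(E_2)\Rightarrow(F_2)$, and $(F_2)$ plus the uniform bound recovers the interior convergence and hence, by the same density argument as before, the full statement. Assembling the chain $(B)\Rightarrow(A)\Leftrightarrow(A')\Leftrightarrow(C)\Leftrightarrow(D)$, $(C)\Leftrightarrow(E)$, $(A)\Rightarrow(B)$, and in 2D $(E)\Leftrightarrow(E_2)\Leftrightarrow(F_2)$ completes the proof; the only genuinely hard link is $(A)\Rightarrow(B)$, and everything else is bookkeeping with integration by parts and the uniform energy bound.
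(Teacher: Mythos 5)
Your forward chain $(A)\Leftrightarrow(A')\Rightarrow(C)\Rightarrow(D)$ and $(C)\Rightarrow(E)$ matches the paper, and deferring $(A)\Rightarrow(B)$ to Kato's argument is exactly what the paper does. But the step on which the whole theorem turns --- getting back from the vorticity conditions to $(A)$ --- is where your proposal has a genuine gap. You claim $(E)\Rightarrow(C)$ by ``running the computation backwards,'' i.e.\ by de-antisymmetrizing the test matrix. This cannot work: pairing $\omega(u)$ against $M$ only ever sees the antisymmetric part of $M$, since for symmetric $S$ one has $2(\omega(u),S)=(\grad u, S-S^T)=0$. So $(E)$ gives you $(C)$ only for antisymmetric test matrices and says nothing about $(\grad u, S)$ for symmetric $S$; recovering that would require $(A')$ itself, which is circular. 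The paper is explicit that this naive reversal fails (it is precisely why the interior analogue $(F)\Rightarrow(A)$ is open for $d\ge 3$, as discussed in \refS{Remarks}). The actual proof of $(E)\Rightarrow(A)$ is the technical heart of \refT{MainResult}: given $v\in H$ one solves $\Delta x=v$ with $x\in (H^2\cap H^1_0)^d$, uses the identity of \refL{omegavsgrad}, $(\grad u,\grad x)=2(\omega(u),\omega(x))+\int_\Gamma(\grad u\,x)\cdot\mathbf n$ (which crucially exploits $\dv u=0$), applies $(E)$ with the specific test matrix $\omega(x)$, and then shows by a separate integration-by-parts computation that the leftover boundary term $\int_\Gamma((\grad x)^T\cdot\mathbf n)\cdot\ol u$ vanishes (\refE{EImpliesACondition}). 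Nothing in your proposal substitutes for this.

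Two secondary points. First, your route from $(D)$ back to $(A')$ via ``the uniform $L^2$ bound and a density/compactness argument'' is too vague: per-$t$ weak-$*$ subsequence extraction does not obviously yield convergence \emph{uniform} on $[0,T]$, and identifying the limit from its gradient alone leaves an additive constant to rule out. The paper avoids all of this with \refC{Cdiv}: every $v\in H$ equals $\dv M$ for some $M\in(H^1_0(\Omega))^{d\times d}$, so $(u(t),v)=-(\grad u(t),M)$ and $(D)$ transfers directly and uniformly. Second, your $(F_2)\Rightarrow$ (everything) ``by the same density argument'' likewise needs the concrete representation the paper uses, namely the stream function $v=\grad^\perp f$ with $f\in H^1_0(\Omega)$, which converts $(u(t),v)$ into $-(\omega(t),f)$. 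Your identification of $(E_2)$ as the $2$D specialization of $(E)$ via $M=f\bigl(\begin{smallmatrix}0&1\\-1&0\end{smallmatrix}\bigr)$ is fine in the forward direction, but the same antisymmetry obstruction means it does not by itself close the $2$D loop either; the paper instead proves $(A')\Rightarrow(E_2)\Rightarrow(F_2)\Rightarrow(A)$ directly.
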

\begin{proof}

$\mathbf{(A) \iff (A')}$: Let $v$ be in $(L^2(\Omega))^d$. By \refL{LHW}, $v = w + \grad p$, where $w$ is in $H$ and $p$ is in $H^1(\Omega)$. Then assuming $(A)$ holds,
\begin{align*}
	(u(t), v)
		&
		= (u(t), w)
		\to (\ol{u}(t), w)
		= (\ol{u}(t), v)
		\end{align*}
uniformly over $t$ in $[0, T]$, so $(A')$ holds. The converse is immediate.

	\medskip

	\noindent $\mathbf{(A) \iff (B)}$: The forward implication is proved in Theorem 1
	of \cite{Kato1983}. The backward implication is immediate.
	
	\medskip
	
	\noindent $\mathbf{(A') \implies (C)}$: Assume that ($A'$) holds and let $M$ be in
	$(H^1(\Omega))^{d \times d}$. Then
	\begin{align*}
		(\grad u(t), M&)
			= - (u(t), \dv M) 
			 \to -(\ol{u}(t), \dv M)
					\text{ in } L^\iny([0, T]).
	\end{align*}
	But,
		\begin{align*}
			-(\ol{u}(t), \dv M)
					= (\grad \ol{u}(t), M)
									- \int_\Gamma (M \cdot \mathbf{n}) \cdot \ol{u},
	\end{align*}
	giving ($C$).
	
	\medskip
	
\noindent $\mathbf{(C) \implies (D)}$: This follows simply because $H^1_0(\Omega) \subseteq H^1(\Omega)$.

	\medskip
	

	\medskip
	
	\noindent $\mathbf{(D) \implies (A)}$: Assume ($D$) holds, and let $v$ be
	in $H$. Then $v = \dv M$ for some $M$ in $(H^1_0(\Omega))^{d \times d}$ by
	\refC{Cdiv}, so
	\begin{align*}
		(u(t), &v)
			= (u(t), \dv M)
			=  -(\grad u(t), M) + \int_\Gamma (M \cdot \mathbf{n}) \cdot u(t) \\
			&= -(\grad u(t), M)
			\to -(\grad \ol{u}(t), M),
	\end{align*}
	uniformly over $[0, T]$.
	But,
	\begin{align*}
		- (\grad \ol{u}(t), M)
			= (\ol{u}(t), \dv M) - \int_\Gamma (M \cdot \mathbf{n}) \cdot \ol{u}(t)
				= (\ol{u}(t), v),
	\end{align*}
	from which ($A$) follows.
	
	\medskip
	
	Now assume that $d = 2$.
	
	\medskip

	\noindent $\mathbf{(A') \implies (E_2)}$: Assume that ($A'$) holds and let $f$ be in
	$H^1(\Omega)$. Then
	\begin{align*}
		(\omega(t), f&)
			= - (\dv u^\perp(t), f)
			= (u^\perp(t), \grad f)
			= - (u(t), \grad^\perp f) \\
			&\to -(\ol{u}(t), \grad^\perp f)
					\text{ in } L^\iny([0, T])
	\end{align*}
	where $u^\perp = -\innp{u^2, u^1}$ and we used the identity $\omega(u) = - \dv u^\perp$.
	But,
		\begin{align*}
			-(\ol{u}(t), &\grad^\perp f)
					= (\ol{u}^\perp(t), \grad f) 
					= - (\dv \ol{u}^\perp(t), f) 
							+ \int_\Gamma (\ol{u}(t)^\perp \cdot \mathbf{n}) f \\
					&= - (\dv \ol{u}^\perp(t), f) 
							- \int_\Gamma (\ol{u}(t) \cdot \BoldTau) f 
					= (\ol{\omega}(t), f) 
							- \int_\Gamma (\ol{u}(t) \cdot \BoldTau) f,
	\end{align*}
	giving ($E_2$).
	
	\medskip
	
	\noindent $\mathbf{(E_2) \implies (F_2)}$: Follows for the same reason that
	$(C) \implies (D)$.
	
	\medskip
	
	\noindent $\mathbf{(F_2) \implies (A)}$: Assume ($F_2$) holds, and let $v$ be
	in $H$. Then $v = \grad^\perp f$ for some $f$ in $H^1_0(\Omega)$ ($f$ is called
	the stream function for $v$), and
	\begin{align*}
		(u(t), &v)
			= (u(t), \grad^\perp f)
			= - (u^\perp(t), \grad f)
			= (\dv u^\perp(t), f) \\
			&= - (\omega(t), f)
			\to - (\ol{\omega}(t), f) \text{ in } L^\iny([0, T]).
	\end{align*}
	But,
	\begin{align*}
		- (\ol{\omega}(t), &f)
			= (\dv \ol{u}^\perp(t), f)
			= - (\ol{u}^\perp(t), \grad f)
			= (\ol{u}(t), \grad^\perp f) \\
			&= (\ol{u}(t), v),
	\end{align*}
	which shows that ($A$) holds.
	
What we have shown so far is that ($A$), ($A'$), ($B$), ($C$), and ($D$) are equivalent, as are $(E_2)$ and $(F_2)$ in two dimensions. It remains to show that $(E)$ is equivalent to these conditions as well. We do this by establishing the implications $(C) \implies (E) \implies (A)$.

\medskip

\noindent $\mathbf{(C) \implies (E)}$: Follows directly from \refE{VorticityRd}.

\medskip

\noindent $\mathbf{(E) \implies (A)}$: Let $v$ be in $H$ and let $x$ be the vector field in $(H^2(\Omega) \cap H_0^1(\Omega))^d$ solving $\Delta x = v$ on $\Omega$ ($x$ exists and is unique by standard elliptic theory). Then, utilizing \refL{omegavsgrad} twice (and suppressing the explicit dependence of $u$ and $\ol{u}$ on $t$),
\begin{align}\label{e:EImpliesAEquality}
	\begin{split}
	(u, v)
		&= (u, \Delta x)
		= - (\grad u, \grad x) + \int_\Gamma (\grad x \cdot \mathbf{n}) \cdot u
		= - (\grad u, \grad x) \\
		&= -2 (\omega(u), \omega(x)) - \int_\Gamma (\grad u x) \cdot \mathbf{n}
		= -2 (\omega(u), \omega(x)) \\
		&\to -2(\omega(\ol{u}), \omega(x))
				+ 2 \frac{1}{2} \int_\Gamma((\omega(x) - \omega(x)^T) \cdot \mathbf{n})
							\cdot \ol{u} \\
		&= -2(\omega(\ol{u}), \omega(x))
				+ 2 \int_\Gamma(\omega(x) \cdot \mathbf{n})
							\cdot \ol{u} \\
		&= -(\grad \ol{u}, \grad x)
				+ \int_\Gamma (\grad \ol{u} x) \cdot \mathbf{n}
				+ 2 \int_\Gamma(\omega(x) \cdot \mathbf{n})
							\cdot \ol{u} \\
		&= -(\grad \ol{u}, \grad x)
				+ 2 \int_\Gamma(\omega(x) \cdot \mathbf{n})
							\cdot \ol{u} \\
		&= (\ol{u}, \Delta x)
				- \int_\Gamma (\grad x \cdot \mathbf{n}) \cdot \ol{u}
				+ 2 \int_\Gamma(\omega(x) \cdot \mathbf{n})
							\cdot \ol{u} \\
		&= (\ol{u}, v)
				- \int_\Gamma ((\grad x)^T \cdot \mathbf{n}) \cdot \ol{u}.
	\end{split}
\end{align}
Thus, $(E) \implies (A)$ if and only if
\begin{align}\label{e:EImpliesACondition}
	\int_\Gamma ((\grad x)^T \cdot \mathbf{n}) \cdot \ol{u}
		= 0.
\end{align}

But, $(\dv (\grad x)^T)^j = \prt_j \prt_i x^j = \prt_i \dv x$ or $\dv (\grad x)^T = \grad \dv x$. Similarly, $\dv (\grad \ol{u})^T = \grad \dv \ol{u} = 0$. It follows that
\begin{align*}
	\int_\Gamma ((&\grad x)^T \cdot \mathbf{n}) \cdot \ol{u}
		 = ((\grad x)^T, \grad \ol{u}) + (\grad \dv x, \ol{u}) \\
		&= (\grad x, (\grad \ol{u})^T)
			- (\dv x, \dv \ol{u}) + \int_\Gamma \ol{u} \cdot \mathbf{n} \dv x
		= ((\grad \ol{u})^T, \grad x) \\
		&= - (\dv (\grad \ol{u})^T, x)
					+ \int_\Gamma ((\grad \ol{u})^T \cdot \mathbf{n}) x
		= 0.
\end{align*}

\end{proof}

%
%
\section{Remarks}\label{S:Remarks}

\noindent

The equivalent conditions of \refT{MainResult} complement those of \cite{Kato1983}, \cite{TW1998}, \cite{W2001}, and \cite{K2006Kato}.

It is only in the proof of $(A) \implies (B)$---in which we quote a result of Kato's in \cite{Kato1983}---where the requirement that $\ol{u}$ be a classical solution to the Euler equations and that $f \to \ol{f}$ in $L^1([0, T]; L^2)$ is used; in fact, it is the only place where the fact that $u$ and $\ol{u}$ are solutions to the Navier-Stokes and Euler equations, respectively, appear in the proof at all. That is, assuming only that $u$ is a vector field parameterized by $\nu$ that lies in $L^\iny([0, T]; H) \cap L^2([0, T]; V)$ and that $\ol{u}$ is a vector field lying in $L^\iny([0, T]; H \cap H^1(\Omega))$, all of the implications in the proof of \refT{MainResult} remain valid except for $(A) \implies (B)$.

The proof of $(A) \implies (B)$ in \cite{Kato1983} consists, using our terminology, of proving $(B) \implies (A) \implies (i) \implies (ii) \implies (B)$, where $(i)$ and $(ii)$ are the conditions,
\begin{align*}
	(i) & \qquad \nu \int_0^T \norm{\grad u}_{L^2(\Omega)}^2 \to 0, \\
	(ii) & \qquad \nu \int_0^T \norm{\grad u}_{L^2(\Gamma_{c \nu})}^2 \to 0,
\end{align*}
with $\Gamma_{c \nu}$ a boundary layer of width proportional to $\nu$.

If we weaken the regularity of $\Gamma$ from $C^2$ to only locally Lipschitz, then the proof of $(E) \implies (A)$ fails because we would only have $x$ in $(H^1_0(\Omega))^d$. Kato's proof that $(A) \implies (B)$ also requires a $C^2$ boundary.

In two dimensions, we need only have convergence of the vorticity away from the boundary---condition $(F_2)$---to insure that the vanishing viscosity limit holds. In particular, it follows that formation of a vortex sheet on the boundary of a type other than that given in $(E_2)$ is inconsistent with $u$ being a solution to ($NS$). In higher dimensions it is an open problem whether the analogous statement is true; that is, whether $(F) \implies (A)$, where $(F)$ is the condition,
\begin{align*}
	(F) & \qquad \omega \to \ol{\omega} \text{ in } 
	 				(H^{-1}(\Omega))^{d \times d}
	 				   \text{ uniformly on } [0, T].
\end{align*}
The remarks that follow attempt to give some insight into the nature of this problem. 

One approach to proving that $(F) \implies (A)$ is to prove that $(F) \implies (D)$, since we have $(D) \implies (A)$.
So suppose that $(F$) holds, and let $M$ be in $(H_0^1(\Omega))^{d \times d}$. For any vector field $v$,
\begin{align*}
	(\grad v, M)
		&= (\grad v - (\grad v)^T, M) + ((\grad v)^T, M)
		= 2 (\omega(v), M) + (\grad v, M^T).
\end{align*}
Thus,
\begin{align*}
	(\grad u, M - M^T)
		= 2 (\omega(u), M)
		\to 2 (\omega(\ol{u}), M)
		= (\grad \ol{u}, M - M^T).
\end{align*}
If $M$ is antisymmetric then $M - M^T = 2M$ and we conclude that $(D)$ holds for antisymmetric matrix fields in $(H_0^1(\Omega))^{d \times d}$. But if $M$ in $(H_0^1(\Omega))^{d \times d}$ is symmetric,
\begin{align*}
	2 (\omega(u), M)
		= (\grad u, M) - ((\grad u)^T, M)
		= (\grad u, M - M^T)
		= 0,
\end{align*}
so $(\omega(u), M) = (\omega(\ol{u}), M) = 0$, and we can conclude nothing from this approach.

But some use can still be made of this observation. Let $v$ be any element of $H$. Then from \refC{Cdiv}, for some $M$ in $(H_0^1(\Omega))^{d \times d}$,
\begin{align*}
	(u, v)
		&= (u, \dv M)
		= -(\grad u, M).
\end{align*}
Now, if we could insure that $M$ can be chosen to be antisymmetric, then if $(F)$ holds for $M$ so does $(D)$, as we just showed, and
\begin{align*}
	-(\grad u, M) \to -(\grad \ol{u}, M)
	= (\ol{u}, v),
\end{align*}
and $(A)$ would follow.

In two dimensions, we can choose
\begin{align}\label{e:MR2}
	M =
		\begin{pmatrix}
			0 & -f \\
			f & 0
		\end{pmatrix},
\end{align}
where $f$ is the stream function for $v$ as in the proof of $(F_2) \implies (A)$, which gives a slight variation on the proof of that same implication.

In three dimensions, however, it is not possible to find such an $M$. To see this, suppose that $M$ in $(H_0^1(\Omega))^{d \times d}$ is antisymmetric. Then can write $M$ as
\begin{align*}
	M
		&= \begin{pmatrix}
				0 & a & b \\
				-a & 0 & c \\
				-b & -c & 0
		\end{pmatrix}
\end{align*}
with $a = b = c = 0$ on $\Gamma$. In this form the condition $\dv u = \dv \dv M = 0$ is automatically satisfied, and letting $\tilde{\omega}$ be the vector $\innp{c, -b, a}$, we see that
\begin{align}\label{e:Momega}
	u
		= \dv M
		= \curl \tilde{\omega},
\end{align}
where $\curl$ is the usual three-dimensional operator. But $\curl$ maps $H \cap C^\iny(\Omega)$ bijectively onto itself when $\Gamma$ is $C^\iny$ (see, for instance, \cite{CDG2000}), so in general we only have $\tilde{\omega} \cdot \mathbf{n} = 0$ on $\Gamma$. That is, the condition that $M$ be antisymmetric is not compatible with the condition that it vanish on $\Gamma$.

Finally, let
\begin{align*}
	E(\Omega) = \set{u \in (L^2(\Omega))^d \colon \dv u \in L^2(\Omega),
				u \cdot \mathbf{n} = 0 \text{ on } \Gamma},
\end{align*}
with $\norm{u}_{E(\Omega)} = \norm{u}_{L^2(\Omega)} + \norm{\dv u}_{L^2(\Omega)}$.
It is easy to see from the proofs of $(A') \implies (C)$ and $(D) \implies (A)$ that condition $(D)$ can be weakened from convergence in the dual space of $(H_0^1(\Omega))^{d \times d}$ to convergence in the dual space of $(E(\Omega))^d$. (This is advantageous as a sufficient condition, though not as a necessary one.)

Returning to \refE{Momega}, the condition $\tilde{\omega} \cdot \mathbf{n} = 0$ on $\Gamma$ does not translate to $M \cdot \mathbf{n} = 0$ on $\Gamma$. Hence, $M$ does not lie in $(E(\Omega))^d$ so we cannot use this weakening of condition $(D)$ to conclude that $(A)$ holds.

%
%
\section{Characteristic boundary conditions on the velocity}\label{S:VVDirichletBCs}

\noindent We modify ($NS$) by allowing the velocity on the boundary to be equal to a nonzero time-varying vector field $b$, which is required, however, to be tangential to the boundary. This gives
\begin{align*}
    \begin{matrix}
        (NS_b) & \left\{
            \begin{array}{l}
                \prt_t u + u \cdot \grad u + \grad p = \nu \Delta u + f
                        \text{ and }
                    \dv u = 0 \text{ on } [0, T] \times \Omega, \\
                u = b \text { on } [0, T] \times \Gamma
                \text{ and } u = u_\nu^0 \text{ on } \set{0} \times \Omega,
            \end{array}
            \right.
    \end{matrix}
\end{align*}
where, as before, $u^0_\nu$ is in $H$.

We\MarginNote{What is the best reference giving the existence results for nonzero $b$?}require sufficient regularity on $b$ so that $(NS_b)$ is well-posed. For simplicity, we will assume that $b \cdot \mathbf{n} = 0$ on $\Gamma$ with
\begin{align}\label{e:bRegularity}
	b \in L^\iny([0, T]; H^{3/2}(\Gamma)), \;
	\prt_t b \in L^2([0, T]; H^{-1/2}(\Gamma))
\end{align}
so that $b$ lifts (extends) to a vector field, which we also call $b$, with
\begin{align*}
	b \in L^\iny([0, T]; H \cap H^2(\Omega)), \;
	\prt_t b \in L^2([0, T]; L^2(\Omega)).
\end{align*}
The assumption on $b$ in \refE{bRegularity} is not the weakest possible, but the assumption on $\prt_t b$ can probably not be weakened.

We can then use the equation that corresponds to $u - b$ (which lies in $L^\iny([0, T]; H) \cap L^2([0, T]; V)$ for classical solutions) to define a weak solution to ($NS_b$). (This is essentially what is done in Section 4 of \cite{W2001}.)

With such solutions to $(NS_b)$ in place of those for ($NS$)---but without changing the formulation of ($EE$)---we define the condition,
\begin{align*}
	(C^b) & \qquad \grad u \to \grad \ol{u} - \innp{\gamma_\mathbf{n} \cdot,
				(\ol{u} - b) \mu}  \text{ in } ((H^1(\Omega))^{d \times d})'
	 				   \text{ uniformly on } [0, T], \\
	(E^b) & \qquad \omega \to \ol{\omega}
					- \frac{1}{2} \innp{\gamma_\mathbf{n} (\cdot - \cdot^T),
								(\ol{u} - b) \mu}
					\text{ in } 
	 				((H^1(\Omega))^{d \times d})' \\
		& \qquad
	 				   \text{uniformly on } [0, T],
\end{align*}
and in two dimensions, the condition,
\begin{align*}
	(E_2^b) & \qquad \omega \to \ol{\omega} - ((\ol{u} - b) \cdot \BoldTau) \mu 
				\text{ in } (H^1(\Omega))'
					   \text{ uniformly on } [0, T].
\end{align*}

\refT{MainResult} then becomes:
\begin{theorem}\label{T:MainResultb}
Let $u$ be a solution to ($NS_b$) and $\ol{u}$ be a solution to ($EE$). Conditions ($A$), ($A'$), ($B$), ($C^b$), ($D$), and ($E^b$) are equivalent. In two dimensions, conditions ($E_2^b$) and ($F_2$) are equivalent to the other conditions.
\end{theorem}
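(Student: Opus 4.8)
The plan is to reduce \refT{MainResultb} to \refT{MainResult} --- more precisely, to the form of its proof isolated in \refS{Remarks} --- via the change of unknowns $w := u - b$, $\ol w := \ol u - b$. By construction of the weak solution to $(NS_b)$, $w$ lies in $L^\iny([0, T]; H) \cap L^2([0, T]; V)$, and since $\ol u$ is a classical solution to $(EE)$ and $b$ lifts to $L^\iny([0, T]; H \cap H^2(\Omega))$, $\ol w$ lies in $L^\iny([0, T]; H \cap H^1(\Omega))$; moreover $w = 0$ on $\Gamma$, $\ol w \cdot \mathbf{n} = 0$ on $\Gamma$, and $\dv w = \dv \ol w = 0$, which are exactly the structural facts about $u$ and $\ol u$ used in the proof of \refT{MainResult}.

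The first step is to match the conditions. Because $u - \ol u = w - \ol w$ identically, conditions $(A)$, $(A')$, $(B)$ for $(u, \ol u)$ are verbatim the same statements as for $(w, \ol w)$; likewise $\grad u - \grad \ol u = \grad w - \grad \ol w$ and $\omega(u) - \omega(\ol u) = \omega(w) - \omega(\ol w)$ make $(D)$ and (in two dimensions) $(F_2)$ unchanged. For the boundary-corrected conditions, writing $\grad u = \grad w + \grad b$ and using $\ol u - b = \ol w$ turns $(C^b)$ into
\begin{align*}
	\grad w \to \grad \ol w - \innp{\gamma_\mathbf{n} \cdot, \ol w \mu}
		\text{ in } ((H^1(\Omega))^{d \times d})' \text{ uniformly on } [0, T],
\end{align*}
which is exactly condition $(C)$ for $(w, \ol w)$ --- both sides have been shifted by the fixed object $\grad b$, and the $-b$ correction in $(C^b)$ is precisely what absorbs that shift. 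The same computation with $\omega(u) = \omega(w) + \omega(b)$ gives $(E^b) \leftrightarrow (E)$ and $(E_2^b) \leftrightarrow (E_2)$ for the pair $(w, \ol w)$.

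The second step invokes the observation of \refS{Remarks}: every implication in the proof of \refT{MainResult} is valid for an arbitrary $w \in L^\iny([0, T]; H) \cap L^2([0, T]; V)$ and $\ol w \in L^\iny([0, T]; H \cap H^1(\Omega))$, with the sole exception of $(A) \implies (B)$. Applied to our $(w, \ol w)$, this yields immediately the mutual equivalence of $(A)$, $(A')$, $(C^b)$, $(D)$, $(E^b)$ --- and, in two dimensions, $(E_2^b)$ and $(F_2)$ --- together with $(B) \implies (A)$. In particular the only non-routine computation in that proof, the vanishing of the boundary term in \refE{EImpliesACondition} that drives $(E) \implies (A)$, goes through unchanged with $\ol w$ in place of $\ol u$, since its proof used only $\dv \ol u = 0$, $\ol u \cdot \mathbf{n} = 0$ on $\Gamma$, and $x = 0$ on $\Gamma$.

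The only substantive point left is $(A) \implies (B)$ for $(w, \ol w)$ --- equivalently, that $u \to \ol u$ weakly in $H$ uniformly on $[0, T]$ forces $u \to \ol u$ in $L^\iny([0, T]; H)$, for solutions of $(NS_b)$ and $(EE)$. Here I would run Kato's argument on the equation satisfied by $w = u - b$: the energy identity for $w$ (which now carries additional terms produced by the lift $b$, involving $\nu \Delta b$, $\prt_t b$, and quadratic terms in $b$ and $w$) together with the weak convergence gives $\nu \int_0^T \norm{\grad w}_{L^2(\Omega)}^2 \to 0$, hence $\nu \int_0^T \norm{\grad u}_{L^2(\Gamma_{c\nu})}^2 \to 0$, and Kato's boundary-layer construction then upgrades this to convergence in $L^\iny([0, T]; H)$. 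I would cite \cite{W2001} (Section 4), which treats precisely this inhomogeneous situation, rather than reproduce the estimates. This is the main obstacle: one has to check that the regularity of $b$, and hence of its lift, assumed in \refE{bRegularity} --- in particular the hypothesis on $\prt_t b$ --- is strong enough for the energy identity for $w$ to close and for the extra boundary-layer error terms involving $b$ to be controlled, which is exactly why \refE{bRegularity} cannot be materially weakened. Everything else in the theorem follows from the reduction above.
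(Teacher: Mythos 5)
Your proposal is correct and follows essentially the same route as the paper: the paper likewise relies on the observation (made in \refS{Remarks}) that every implication in the proof of \refT{MainResult} except $(A) \implies (B)$ is purely structural, needing only $u \in L^\iny([0,T];H) \cap L^2([0,T];V)$ and $\ol{u} \in L^\iny([0,T]; H \cap H^1(\Omega))$, reruns the affected computations with the boundary value $u = b$ on $\Gamma$ to produce the ``$-b$'' corrections in $(C^b)$, $(E^b)$, $(E_2^b)$, and handles $(A) \implies (B)$ by the same appeal to \cite{W2001} combined with Kato's argument. Your substitution $w = u - b$, $\ol{w} = \ol{u} - b$ is a slightly tidier packaging of the same computations --- legitimate because $b$ is independent of $\nu$ and $\grad b$, $\omega(b)$ are fixed elements of $L^\iny([0,T]; L^2(\Omega))$ that can be subtracted from both sides of each weak-$*$ convergence statement uniformly in $t$ --- but it does not change the substance of the argument.
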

\begin{proof}

The proof of $(A') \implies (C^b)$ is identical to the proof of $(A') \implies (C)$ except that a boundary term is included in the first step: this term leads to the ``$- b$'' in condition ($C^b$). A similar comment applies to the proof of $(A') \implies (E_2^b)$ and $(C) \implies (E^b)$. The proof of $(C) \implies (D)$ and $(E_2) \implies (F_2)$ are unaffected by the presence of the vector field $b$, as are all other implications except for $(A) \implies (B)$ and $(E^b) \implies (A)$.

\medskip

\noindent $\mathbf{(A) \implies (B)}$: In \cite{W2001} it is shown, using our terminology, that $(B) \implies (i) \implies (ii') \implies (B)$, where ($ii'$) is the condition
\begin{align*}
	(ii') & \qquad \nu \int_0^T \norm{\grad_{\BoldTau} u_{\BoldTau}}_{L^2(\Gamma_{\delta(\nu)})}^2 \to 0
				\text{ or } 
	     \nu \int_0^T \norm{\grad_{\BoldTau} u_\mathbf{n}}_{L^2(\Gamma_{\delta(\nu)})}^2 \to 0.
\end{align*}
Here, $\grad_{\BoldTau}$ is the gradient only in the direction tangential to the boundary, $u_{\BoldTau}$ and $u_\mathbf{n}$ are the components of the velocity tangential and normal to the boundary, respectively, and $\Gamma_{\delta(\nu)}$ is a boundary layer whose width $\delta(\nu)$ is of arbitrary order larger than $\nu$.

But, $(B) \implies (A)$ is immediate, and $(A) \implies (i)$ follows from combining the argument on pages 232 through 233 of \cite{W2001} with the proof of the implication $(A) \implies (i)$ on page 90 of \cite{Kato1983} (in Kato's terminology, this is (ii) implies (iii)). This gives $(A) \implies (B)$.

\medskip

\noindent $\mathbf{(E^b) \implies (A)}$: Adapting the argument of $(E) \implies (A)$, we see that the first boundary integral in \refE{EImpliesAEquality} does not vanish, since now $u = b$ on $\Gamma$. Also, $\ol{u}$ becomes $\ol{u} - b$ in the boundary integrals involving $\omega(x)$. This leads to
\begin{align*}
	(u, v)
		&= (\ol{u}, \Delta x)
				+ \int_\Gamma (\grad x \cdot \mathbf{n}) \cdot b
				- \int_\Gamma (\grad x \cdot \mathbf{n}) \cdot \ol{u}
				+ 2 \int_\Gamma(\omega(x) \cdot \mathbf{n})
							\cdot (\ol{u} - b) \\
		&= (\ol{u}, v)
				- \int_\Gamma ((\grad x)^T \cdot \mathbf{n}) \cdot (\ol{u} - b).
\end{align*}
The last boundary integral vanishes for the same reason that \refE{EImpliesACondition} holds, $\ol{u} - b$ being in $H$, completing the proof.
\end{proof}

%
%
\section{Radially symmetric initial vorticity in a disk}\label{S:FLMT}

\noindent We assume, in this section only, that $\Omega$ is the unit disk $D$ and that the initial vorticity $\ol{\omega}^0$ is radially symmetric. In this case, the solution to ($EE$) is stationary: $\ol{\omega}(t) = \ol{\omega}^0$ for all time $t$.

The vanishing viscosity limit in the classical sense of condition $(B)$ holds in this setting under fairly general circumstances. Under the assumptions of \refS{Introduction} on the regularity of the initial velocity, and assuming that $b = 0$, this convergence is implicit in \cite{Kato1983} at least for zero forcing (see \cite{K2006Disk}), but was first explicitly\MarginNote{Did Matsui consider nonzero forcing?}proved by Matsui (see \cite{Matsui1994}). For nonzero $b$ having the regularity assumed in \refE{bRegularity}, the convergence is a simple consequence of the condition $(ii')$ in the proof of \refT{MainResultb} as observed by Wang in \cite{W2001}. For substantially lower regularity on $\ol{u}^0$ and on $b$ than we assume, the convergence is established in \cite{FLMT2008}.

More precisely, the authors of \cite{FLMT2008} assume that $u^0 = \ol{u}^0$ and $f = \ol{f} = 0$ (which are not significant limitations, since one can handle $u^0 \to \ol{u}^0$ in $H$ by using the triangle inequality, and nonzero forcing presents no real difficulties), with
\begin{align*}
	\ol{u}^0 \in R^1(D)
		&= \set{v \in (L^2(D))^2 \colon
				v(x) = s(\abs{x}) x^\perp \text{ for some } s, \, \omega(v) \in L^1(D)} \\
		&= \set{v \in H \colon \omega(v) \in L^1(D), \, \omega(v)
				\text{ radially symmetric}}.
\end{align*}
They assume that $b(t, \cdot) = \al(t)$---that is, $b(t)$ is constant on the boundary---and that $\al \in \BV([0, T])$, the space of bounded variation functions. They prove (combining Propositions 9.6 and 9.7 of \cite{FLMT2008}) that
\begin{align}\label{e:FLMTConvergence}
	\omega
		\to \ol{\omega} - (B (2 \pi)^{-1} - b \cdot \BoldTau) \mu
				\text{ in } \Cal{M}(\ol{D})
					   \text{ uniformly on } [0, T],
\end{align}
where
\begin{align*}
	B = \int_D \ol{\omega}^0.
\end{align*}
But, on $\Gamma$, $u^0 \cdot \BoldTau$ is constant, so by Green's theorem,
\begin{align*}
	B = \int_\Gamma \ol{u}^0 \cdot \BoldTau
		= 2 \pi \ol{u}^0(x) \cdot \BoldTau(x)
\end{align*}
for any point $x$ on $\Gamma$, and we see that \refE{FLMTConvergence} is the same as condition $(E_2^b)$, except that the convergence is stronger.

That is, both conditions $(B)$ and $(E_2^b)$ hold for a disk, except that the convergence in $(E_2^b)$ is in $\Cal{M}(\ol{\Omega})$, which is stronger convergence than that of $(E_2^b)$. What we have shown is that either both conditions $(B)$ and $(E_2^b)$ hold or neither condition holds for a given initial velocity in a general bounded domain in the plane---and in the analogous sense, in $\R^d$. It was the question of whether this was, in fact, the case that motivated this paper.

The regularity we assume in \refE{bRegularity} corresponds to $\al$ lying in $H^1([0, T])$, which is considerably stronger than the assumption in \cite{FLMT2008} that $\al$ lie in $\BV([0, T])$. And their assumption on the regularity of $\ol{u}^0$ is far lower than our assumption that $\ol{u}^0$ lies in $C^{1 + \eps}(\Omega) \cap H$. Without the assumption of radial symmetry, however, it seems unlikely that one can weaken our assumptions in \refE{bRegularity} on $b$ to any significant degree, since these assumptions go to the heart of establishing the existence of the corresponding weak solutions of ($NS_b$). Weakening the regularity assumptions on $\ol{u}^0$ would seem equally impossible, since the boundedness of $\grad \ol{u}$ on $[0, T] \times \Omega$ is indispensable in Kato's argument showing that $(A) \implies (B)$.

\Ignore{
The proof of \refE{FLMTConvergence} in \cite{FLMT2008} follows quite a different path than we have described here of establishing $(B)$ then applying \refT{MainResultb}. A step along the way is establishing a type of internal convergence of the vorticity. Specifically, for the case of no-slip boundary conditions ($b = 0$), they show (Proposition 9.5 of \cite{FLMT2008}) that
\begin{align*}
	\norm{\omega - \ol{\omega}}_{L^\iny([0, T]; L^1(\Omega'))} \to 0
		\text{ as } \nu \to 0,
\end{align*}
where $\Omega'$ is any open subset of $\Omega$ that is bounded away from $\Gamma$. This along with a bound they establish on the $L^\iny([0, T]; L^1(\Omega))$-norm of $\omega$ gives \refE{FLMTConvergence} for $b = 0$.

The reason that the $L^\iny([0, T]; L^1(\Omega))$-norm of $\omega$ is difficult to obtain in \refE{FLMTConvergence} is that the regularity of $u^0$ is such that, even with radial symmetry, $\omega^0$ is only in $L^1(\Omega)$. 

An interesting question is whether .....
}

%
%
\section{Some technical lemmas}\label{S:Appendix}

\noindent In this section we assume only that $\Omega$ is bounded and that $\Gamma$ is locally Lipschitzian, which of course includes the case that $\Gamma$ is $C^2$.

The various integrations by parts that we make are justified by \refL{IBP}, which is Theorem 1.2 p. 7 of \cite{T2001} for locally Lipschitz domains. (Temam states the theorem for $C^2$ boundaries but the proof for locally Lipschitz boundaries is the same, using a trace operator for Lipschitz boundaries in place of that for $C^2$ boundaries: see p. 117-119 of \cite{G1994}, in particular, Theorem 2.1 p. 119.)

\begin{lemma}\label{L:IBP}
Let
\begin{align*}
	E(\Omega)
		= \set{v \in (L^2(\Omega))^d \colon \dv v \in L^2(\Omega)}
\end{align*}
with $\norm{v}_{E(\Omega)} = \norm{v}_{L^2(\Omega)} + \norm{\dv v}_{L^2(\Omega)}$.
There exists an extension of the trace operator $\gamma_\mathbf{n} \colon (C_0^\iny(\ol{\Omega}))^d \to C^\iny(\Gamma)$ defined by $u \mapsto u \cdot \mathbf{n}$ on $\Gamma$ to a continuous linear operator from $E(\Omega)$ onto $H^{-1/2}(\Gamma)$. The kernel of $\gamma_\mathbf{n}$ is the space $E_0(\Omega)$---the completion of $C_0^\iny(\Omega)$ in the $E(\Omega)$ norm. For all $u$ in $E(\Omega)$ and $f$ in $H^1(\Omega)$,
\begin{align}\label{e:IBPEq}
	(u, \grad f) + (\dv u, f) = \int_\Gamma (u \cdot \mathbf{n}) \ol{f}.
\end{align}
\end{lemma}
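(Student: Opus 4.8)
The plan is to run the classical argument for spaces of the type $H(\dv)$, substituting the trace theory for Lipschitz domains from \cite{G1994} (Theorem 2.1, p.~119) wherever \cite{T2001} uses the trace theory for $C^2$ domains; this is precisely the strategy behind Theorem 1.2, p.~7 of \cite{T2001}, so I would only lay out the skeleton. First I would record the smooth Green identity: for $u$ in $(C^\iny(\ol\Omega))^d$ and $f$ in $C^\iny(\ol\Omega)$ the divergence theorem on a Lipschitz domain gives $(u,\grad f)+(\dv u,f)=\int_\Gamma(u\cdot\mathbf n)\ol f$. Writing $\gamma_0\colon H^1(\Omega)\to H^{1/2}(\Gamma)$ for the scalar trace---which on a Lipschitz domain is bounded, onto, and admits a bounded linear right inverse---I would define, for smooth $u$, the functional $g\mapsto(u,\grad f)+(\dv u,f)$ on $H^{1/2}(\Gamma)$, where $f$ is any $H^1(\Omega)$-lift of $g$. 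This is independent of the lift because two lifts differ by an element of $H^1_0(\Omega)=\overline{C_0^\iny(\Omega)}$, on which the functional vanishes (density together with the smooth identity with no boundary term), and the bound $|(u,\grad f)+(\dv u,f)|\le\norm{u}_{E(\Omega)}\norm{f}_{H^1(\Omega)}$ combined with the bounded right inverse yields $\norm{\gamma_\mathbf n u}_{H^{-1/2}(\Gamma)}\le C\norm{u}_{E(\Omega)}$.

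Next, since $(C^\iny(\ol\Omega))^d$ is dense in $E(\Omega)$ (the usual cutoff-and-mollification argument, which survives for Lipschitz $\Gamma$), $\gamma_\mathbf n$ extends uniquely to a bounded operator $E(\Omega)\to H^{-1/2}(\Gamma)$, and passing to the limit (both sides of the smooth identity are continuous in $u$ for the $E(\Omega)$-norm at fixed $f$) gives \refE{IBPEq} for every $u$ in $E(\Omega)$ and $f$ in $H^1(\Omega)$, the boundary term now read as the $H^{-1/2}$--$H^{1/2}$ pairing $\smallinnp{\gamma_\mathbf n u,\gamma_0 f}$. For surjectivity I would, given $g$ in $H^{-1/2}(\Gamma)$, use Lax--Milgram to solve the coercive problem $(\grad\phi,\grad\psi)+(\phi,\psi)=\smallinnp{g,\gamma_0\psi}$ for all $\psi$ in $H^1(\Omega)$, and set $u=\grad\phi$: then $u\in(L^2(\Omega))^d$ and $\dv u=\Delta\phi=\phi\in L^2(\Omega)$, so $u\in E(\Omega)$, while \refE{IBPEq} gives $\smallinnp{\gamma_\mathbf n u,\gamma_0\psi}=\smallinnp{g,\gamma_0\psi}$ for all $\psi$; since $\gamma_0$ is onto, $\gamma_\mathbf n u=g$.

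For the kernel, the inclusion $E_0(\Omega)\subseteq\ker\gamma_\mathbf n$ is immediate from $\gamma_\mathbf n(C_0^\iny(\Omega))=\set{0}$ and continuity of $\gamma_\mathbf n$. For the reverse inclusion I would take $u$ with $\gamma_\mathbf n u=0$ and let $\tilde u$ be the extension of $u$ by zero to $\R^d$; the statement $\gamma_\mathbf n u=0$ is exactly that the distributional divergence of $\tilde u$ carries no surface term on $\Gamma$, i.e.\ $\dv\tilde u=\widetilde{\dv u}\in L^2(\R^d)$, so $\tilde u\in E(\R^d)$ with support in $\ol\Omega$. Translating the support strictly into $\Omega$ and mollifying then yields $C_0^\iny(\Omega)$-fields converging to $u$ in the $E(\Omega)$-norm, whence $u\in E_0(\Omega)$.

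The main obstacle is the kernel characterization---the zero-extension step and the precise justification that $\gamma_\mathbf n u=0$ removes the singular part of $\dv\tilde u$---together with verifying that the input trace theory (boundedness, surjectivity and a bounded right inverse for $\gamma_0\colon H^1(\Omega)\to H^{1/2}(\Gamma)$, and density of $(C^\iny(\ol\Omega))^d$ in $E(\Omega)$) genuinely holds on a merely locally Lipschitz domain. Both are classical (\cite{G1994}), and they are the only points at which the regularity of $\Gamma$ enters, which is why the $C^2$ hypothesis in \cite{T2001} can be relaxed to local Lipschitz here.
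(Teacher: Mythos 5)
Your proposal is correct and follows essentially the same route as the paper, which offers no written proof but simply cites Theorem 1.2, p.~7 of \cite{T2001} together with the remark that the identical argument goes through for locally Lipschitz boundaries once the trace theory of \cite{G1994} (Theorem 2.1, p.~119) is substituted for the $C^2$ trace theory. The skeleton you lay out---density of smooth fields in $E(\Omega)$, the duality definition of $\gamma_\mathbf{n}$ and passage to the limit in Green's identity, surjectivity via the Lax--Milgram Neumann problem, and the zero-extension/translation/mollification characterization of the kernel---is precisely the classical proof being cited.
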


\begin{lemma}\label{L:uvp}
	Assume that $u$ is in $(\Cal{D}'(\Omega))^d$ with $(u, v)$ = 0 for all $v$ in
	$\Cal{V}$. Then $u = \grad p$ for some $p$ in $\Cal{D}'(\Omega)$. If $u$ is in
	$(L^2(\Omega))^d$ then $p$ is in $H^1(\Omega)$; if $u$ is in $H$ then $p$ is
	in $H^1(\Omega)$ and $\Delta p = 0$.
\end{lemma}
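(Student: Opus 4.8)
The plan is to reduce everything to the de Rham characterization of gradients among distributions and then read off the two regularity statements as short corollaries. First I would invoke the classical fact — de Rham's theorem, stated for bounded Lipschitz open sets in \cite{T2001} (Chapter I, Proposition 1.1 together with the attendant remarks on the purely distributional case; see also \cite{G1994}) — that a distribution $u \in (\Cal{D}'(\Omega))^d$ satisfying $(u, v) = 0$ for every $v$ in $\Cal{V}$ is necessarily of the form $u = \grad p$ for some $p$ in $\Cal{D}'(\Omega)$, with $p$ determined up to an additive constant since $\Omega$ is connected. I would not reproduce this argument; the standard self-contained route, if one insisted, is a partition of unity reducing to small balls, the Poincar\'e lemma on each ball to produce a local potential, and a patching step using connectedness of $\Omega$, but this is exactly the sort of background result the paper otherwise imports by citation.

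Given the representation $u = \grad p$, the remaining assertions are routine. If $u$ is in $(L^2(\Omega))^d$, then $\grad p = u$ lies in $(L^2(\Omega))^d$; since $\Omega$ is bounded with (at least) locally Lipschitz boundary, the Ne\v{c}as inequality — \cite{T2001}, Chapter I, Proposition 1.2 — upgrades this to $p$ in $L^2(\Omega)$, indeed $\norm{p - c}_{L^2(\Omega)} \le C \norm{\grad p}_{L^2(\Omega)}$ for an appropriate constant $c$, so $p$ lies in $H^1(\Omega)$. If moreover $u$ is in $H$, then $\dv u = 0$ in $\Omega$ by the definition of $H$, hence $\Delta p = \dv \grad p = \dv u = 0$ in $\Cal{D}'(\Omega)$.

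The hard part is entirely the first step: de Rham's theorem is the only non-elementary ingredient, and I expect to dispatch it by citation in the same way the paper handles its other standard functional-analytic inputs (e.g.\ \refL{IBP}). Everything downstream — the Ne\v{c}as estimate promoting $\grad p \in L^2$ to $p \in H^1$, and the identity $\Delta p = \dv u$ — is immediate once $u = \grad p$ is in hand.
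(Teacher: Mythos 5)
Your proposal is correct and follows essentially the same route as the paper, which likewise dispatches the distributional case by citing de Rham's theorem (Proposition 1.1, p.~10 of \cite{T2001}) and the $L^2$ case by citing the standard regularity upgrade in \cite{G1994} and \cite{T2001}; your explicit appeal to the Ne\v{c}as inequality and the observation $\Delta p = \dv u = 0$ merely spell out what those citations contain. No changes needed.
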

\begin{proof}
For $u$ in $(\Cal{D}'(\Omega))^d$ see Proposition 1.1 p. 10 of \cite{T2001}. For $u$ in $(L^2(\Omega))^d$ the result follows from a combination of Theorem 1.1 p. 107 and Remark 4.1 p. 55 of \cite{G1994} (also see Remark 1.4 p. 11 of \cite{T2001}).
\end{proof}

\begin{lemma}\label{L:LHW}
	For any $u$ in $(L^2(\Omega))^d$ there exists a unique $v$ in $H$ and $p$ in
	$H^1(\Omega)$
	such that $u = v + \grad p$.
\end{lemma}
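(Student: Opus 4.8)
The plan is to obtain the decomposition $u = v + \grad p$ as an orthogonal projection in the Hilbert space $(L^2(\Omega))^d$, using \refL{uvp} to identify the orthogonal complement of $H$ with the space of $L^2$ gradients. First I would observe that $H$, as defined in \refS{Introduction}, is a closed subspace of $(L^2(\Omega))^d$: it is the intersection of the kernel of the (continuous, by \refL{IBP}) normal-trace map $\gamma_\mathbf{n}$ restricted to $E(\Omega)$ with the kernel of the divergence viewed as a distribution, both of which are closed conditions under $L^2$-convergence. Hence the orthogonal projection theorem applies: given $u \in (L^2(\Omega))^d$, set $v = P_H u$ and $w = u - v \in H^\perp$; then $u = v + w$ with $v$ in $H$, and this decomposition is unique once we know $H^\perp$ consists exactly of the gradients $\grad p$ with $p \in H^1(\Omega)$.

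The substantive step is the characterization $H^\perp = \set{\grad p : p \in H^1(\Omega)}$. For the inclusion $\supseteq$: if $p \in H^1(\Omega)$ and $v \in H$, then $\grad p \in E(\Omega)$ (indeed $\dv \grad p$ need not be $L^2$, so more carefully one uses that $\grad p \perp H$ follows from $(\grad p, v) = -(p,\dv v) + \int_\Gamma (v\cdot\mathbf{n}) \ol{p} = 0$ by \refL{IBP} applied with the roles arranged so $v \in E(\Omega)$, $\dv v = 0$, $v\cdot\mathbf{n}=0$); thus every $L^2$ gradient is orthogonal to $H$. For the inclusion $\subseteq$: if $w \in (L^2(\Omega))^d$ is orthogonal to $H$, then in particular $(w, v) = 0$ for every $v \in \Cal{V}$ (since $\Cal{V} \subseteq H$), so \refL{uvp} gives $w = \grad p$ for some $p \in \Cal{D}'(\Omega)$, and the second assertion of \refL{uvp} upgrades this to $p \in H^1(\Omega)$ because $w \in (L^2(\Omega))^d$. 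This shows $w = u - v$ has the required form, establishing existence.

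For uniqueness, suppose $v_1 + \grad p_1 = v_2 + \grad p_2$ with $v_i \in H$ and $p_i \in H^1(\Omega)$. Then $v_1 - v_2 = \grad(p_2 - p_1) \in H \cap H^\perp = \set{0}$ by the orthogonality just established, so $v_1 = v_2$ and $\grad p_1 = \grad p_2$; the latter is the only sense in which $p$ is unique (it is determined up to an additive constant if $\Omega$ is connected), which is all the statement claims since only $\grad p$ appears. The main obstacle is purely bookkeeping: making sure that \refL{IBP}'s integration-by-parts formula is invoked with each vector field in the correct space ($E(\Omega)$ versus $H^1$) so that the boundary term genuinely drops, and citing the right clause of \refL{uvp} to pass from the distributional pressure to an $H^1$ pressure; both are straightforward given the lemmas already in hand.
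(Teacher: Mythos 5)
Your proof is correct and is essentially the argument the paper invokes by citation: the paper's own proof simply points to Theorem 1.1, p.~107 of \cite{G1994} together with \refL{uvp}, and that cited theorem is exactly the orthogonal-projection (Helmholtz--Weyl) decomposition you write out, with \refL{uvp} playing the same role of identifying $H^\perp$ with the set of $H^1$ gradients. Your care with the boundary term (applying \refL{IBP} with $v \in E(\Omega)$ rather than $\grad p$) and your observation that $p$ is determined only up to an additive constant are both accurate.
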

\begin{proof}
	This follows, for instance, from Theorem 1.1 p. 107 of \cite{G1994}, which holds for
	an arbitrary domain, along with \refL{uvp}.
\end{proof}

\begin{lemma}\label{L:div}
	For any $f$ in $L^2(\Omega)$ and $a$ in $(H^{1/2}(\Gamma))^d$ satisfying 
	the compatibility condition,
	\begin{align*}
		\int_\Omega f = \int_\Gamma a \cdot \mathbf{n}
	\end{align*}
	there exists a (non-unique) solution $v$ in $(H^1(\Omega))^d$ to $\dv v = f$ 
	in $\Omega$, $v = a$ on $\Gamma$.
\end{lemma}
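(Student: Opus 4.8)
The plan is to reduce the problem to the standard surjectivity of the divergence on $H^1_0$ (Bogovskii's operator) after first removing the boundary data by an explicit lift. First I would use \refL{IBP} (or the trace theory for Lipschitz domains cited just above it) to obtain a vector field $w$ in $(H^1(\Omega))^d$ with $w = a$ on $\Gamma$; since $a \in (H^{1/2}(\Gamma))^d$ this is exactly the surjectivity of the $H^1 \to H^{1/2}$ trace operator, so such a $w$ exists with $\norm{w}_{H^1(\Omega)} \le C \norm{a}_{H^{1/2}(\Gamma)}$. Set $g = f - \dv w \in L^2(\Omega)$. The function $g$ then satisfies the zero-mean condition: by \refL{IBP} applied to $w$ and the constant function $1$,
\begin{align*}
	\int_\Omega \dv w = \int_\Gamma w \cdot \mathbf{n} = \int_\Gamma a \cdot \mathbf{n} = \int_\Omega f,
\end{align*}
so $\int_\Omega g = 0$.

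Next I would invoke the solvability of the divergence equation with homogeneous boundary data: for any $g$ in $L^2(\Omega)$ with $\int_\Omega g = 0$ there exists $z$ in $(H^1_0(\Omega))^d$ with $\dv z = g$ in $\Omega$. This is a classical result (Bogovskii; see e.g. \cite{G1994}, Chapter~III, or the same circle of results already cited in the paper for \refC{Cdiv}) valid for bounded Lipschitz domains, which is precisely the generality assumed in this section. Then $v := w + z$ lies in $(H^1(\Omega))^d$, satisfies $\dv v = \dv w + g = f$ in $\Omega$, and $v = a$ on $\Gamma$ because $z$ vanishes there. Non-uniqueness is manifest since one may add to $v$ any element of $(H^1_0(\Omega))^d$ that is divergence-free (e.g.\ $\curl \varphi$ for $\varphi \in (C_0^\infty(\Omega))^d$ when $d = 3$, or $\grad^\perp \varphi$ with $\varphi \in C_0^\infty(\Omega)$ when $d = 2$), so it suffices to exhibit one solution.

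The only real content is the Bogovskii-type step, i.e.\ surjectivity of $\dv$ from $(H^1_0(\Omega))^d$ onto the mean-zero subspace of $L^2(\Omega)$; everything else is bookkeeping with the trace operator and \refL{IBP}. This is the step I expect to be the main obstacle in the sense that it is the one nontrivial input, but since it is a standard result available at the Lipschitz regularity assumed here, the proof merely needs to cite it. One technical point to be careful about: the compatibility condition in the statement is phrased as $\int_\Omega f = \int_\Gamma a \cdot \mathbf{n}$ with the right-hand side a genuine integral, which requires $a \cdot \mathbf{n}$ to be integrable on $\Gamma$; for $a \in (H^{1/2}(\Gamma))^d$ on a bounded Lipschitz $\Gamma$ this pairing is well defined (indeed $H^{1/2}(\Gamma) \hookrightarrow L^2(\Gamma) \hookrightarrow L^1(\Gamma)$), so no issue arises, and in any case the identity should be read via the $H^{-1/2}(\Gamma)$--$H^{1/2}(\Gamma)$ duality consistent with \refE{IBPEq}.
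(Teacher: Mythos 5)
Your proposal is correct and is essentially the same argument as the paper's: the paper simply cites Lemma 3.2, Remark 3.3, and Exercise 3.4 of \cite{G1994}, which carry out exactly your reduction (lift $a$ by trace surjectivity, check the resulting right-hand side has zero mean, then solve $\dv z = g$ with $z \in (H^1_0(\Omega))^d$ via the Bogovskii-type construction, valid for the bounded locally Lipschitz domains assumed in that section). You have merely written out the steps that the cited references encapsulate, so there is nothing to correct.
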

\begin{proof}
	This follows from Lemma 3.2 p. 126-127, Remark 3.3 p. 128-129, and Exercise 
	3.4 p. 131 of \cite{G1994} (and see the comment on p. 67 of \cite{Adams1978}).
\end{proof}

\begin{cor}\label{C:Cdiv}
	For any $v$ in $H$ there exists a matrix-valued function $M$ in
	$(H^1_0(\Omega))^{d \times d}$ such that $v = \dv M$.
\end{cor}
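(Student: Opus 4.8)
The plan is to build $M$ one row at a time. Write $M$ as the $d \times d$ matrix whose $i$-th row is a vector field $m_i$; then $(\dv M)^i = \sum_j \prt_j m_i^j = \dv m_i$, so it suffices to produce, for each $i \in \set{1, \dots, d}$, a vector field $m_i$ in $(H^1_0(\Omega))^d$ with $\dv m_i = v^i$ in $\Omega$ and $m_i = 0$ on $\Gamma$, and then let $M$ have rows $m_1, \dots, m_d$.

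Such an $m_i$ is provided by \refL{div}, applied with $f = v^i$ in $L^2(\Omega)$ and boundary data $a = 0$ in $(H^{1/2}(\Gamma))^d$, \emph{provided} the compatibility condition of that lemma holds, namely $\int_\Omega v^i = \int_\Gamma 0 \cdot \mathbf{n} = 0$. So the one substantive point is to check that every component of a vector field in $H$ has mean zero over $\Omega$; everything else is bookkeeping. This is exactly where membership in $H$ enters: since $v$ lies in the space $E(\Omega)$ of \refL{IBP} (because $\dv v = 0 \in L^2(\Omega)$) and the coordinate function $x^i$ lies in $H^1(\Omega)$ ($\Omega$ being bounded), \refE{IBPEq} gives
\begin{align*}
	(v, \grad x^i) + (\dv v, x^i) = \int_\Gamma (v \cdot \mathbf{n})\, x^i .
\end{align*}
Here $\grad x^i = \mathbf{e}_i$, $\dv v = 0$, and $v \cdot \mathbf{n} = 0$ on $\Gamma$, so this reduces to $\int_\Omega v^i = 0$, as required.

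Assembling the pieces: for each $i$, \refL{div} yields $m_i$ in $(H^1(\Omega))^d$ with $\dv m_i = v^i$ and vanishing trace on $\Gamma$; since $\Gamma$ is locally Lipschitz, a vanishing trace places $m_i$ in $(H^1_0(\Omega))^d$. The matrix $M$ with rows $m_1, \dots, m_d$ then lies in $(H^1_0(\Omega))^{d \times d}$ and satisfies $\dv M = v$. There is no genuine obstacle in this argument: the only hypothesis that could fail is the compatibility condition of \refL{div}, and it holds automatically for $v$ in $H$.
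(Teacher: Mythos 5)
Your proof is correct and follows essentially the same route as the paper: apply \refL{div} row by row with $a \equiv 0$, the only substantive point being the compatibility condition $\int_\Omega v^i = 0$, which both you and the paper obtain by integrating $v \cdot \grad x_i$ by parts and using $\dv v = 0$ and $v \cdot \mathbf{n} = 0$ on $\Gamma$. Your version merely makes the appeal to \refL{IBP} explicit where the paper leaves it implicit.
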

\begin{proof}
	Let $v$ be in $H$ and observe that
	\begin{align*}
		\int_\Omega v^i
			&= \int_\Omega v \cdot \grad x_i
			= - \int_\Omega \dv v \, x_i + \int_\Gamma (v \cdot \mathbf{n}) x_i
			= 0.
	\end{align*}
	Thus, we can apply \refL{div} to each component $v^i$ using $a \equiv 0$ to obtain
	a vector $w^i$ in $H^1_0(\Omega)$ satisfying $\dv w^i = v^i$ on $\Omega$, $w^i = 0$
	on $\Gamma$. Forming a matrix-valued
	function whose rows are $w^1, w^2, \dots, w^d$ gives $M$.
\end{proof}

\begin{lemma}\label{L:omegavsgrad}
	Assume that $u$ is in $(H^2(\Omega))^d$ or $(C^1(\Omega))^d$ with $\dv u = 0$
	and $v$ is in $(H^1(\Omega))^d$. Then
   \begin{align*}
        (\grad u, \grad v)
            = 2 (\omega(u), \omega(v))
            		+ \int_\Gamma (\grad u v)
                    \cdot \mathbf{n}.
    \end{align*}
\end{lemma}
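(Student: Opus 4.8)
The statement to prove is Lemma~\ref{L:omegavsgrad}: for $u$ in $(H^2(\Omega))^d$ or $(C^1(\Omega))^d$ with $\dv u = 0$ and $v$ in $(H^1(\Omega))^d$,
\[
(\grad u, \grad v) = 2(\omega(u),\omega(v)) + \int_\Gamma (\grad u\, v)\cdot \mathbf{n}.
\]

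The plan is to expand $2(\omega(u),\omega(v))$ directly from the definition \eqref{e:VorticityRd} and integrate by parts on the ``cross'' term. First I would write, in index notation with summation over repeated indices,
\[
4(\omega(u),\omega(v)) = \bigl(\prt_i u^j - \prt_j u^i,\ \prt_i v^j - \prt_j v^i\bigr)
= 2(\prt_i u^j, \prt_i v^j) - 2(\prt_i u^j, \prt_j v^i),
\]
using the antisymmetry to combine the two equal diagonal pieces and the two equal off-diagonal pieces. The first term on the right is exactly $2(\grad u, \grad v)$, so it remains to show that
\[
(\prt_i u^j, \prt_j v^i) = (\grad u,\grad v) - \int_\Gamma (\grad u\, v)\cdot\mathbf{n},
\]
i.e.\ that $2(\grad u,\grad v) - 2(\prt_i u^j,\prt_j v^i) = 2(\grad u,\grad v) - 2(\grad u,\grad v) + 2\int_\Gamma(\grad u\, v)\cdot\mathbf n$; tracking the factor of $2$ this is the claimed identity.

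The key step is the integration by parts on $(\prt_i u^j, \prt_j v^i)$. Here I would use $\dv u = 0$: integrating by parts in the $j$ variable,
\[
(\prt_i u^j, \prt_j v^i) = -(\prt_j\prt_i u^j, v^i) + \int_\Gamma (\prt_i u^j)\, \mathbf{n}_j\, v^i
= -(\prt_i \dv u, v^i) + \int_\Gamma (\prt_i u^j)\, \mathbf{n}_j\, v^i = \int_\Gamma (\prt_i u^j\, v^i)\,\mathbf n_j,
\]
since $\prt_j\prt_i u^j = \prt_i(\prt_j u^j) = \prt_i \dv u = 0$. The boundary integrand $(\prt_i u^j)\mathbf n_j v^i$ is precisely $((\grad u)\, v)\cdot\mathbf n$ with the convention that $(\grad u)_{ij} = \prt_j u^i$ (so that $((\grad u)v)^i = (\prt_j u^i) v^j$ and its dot with $\mathbf n$ is $(\prt_j u^i)v^j \mathbf n_i$; matching indices this agrees). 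A parallel integration by parts in $j$ on $(\prt_i u^j, \prt_i v^j) = (\grad u, \grad v)$ is not needed—that term is already in final form. Assembling: $4(\omega(u),\omega(v)) = 2(\grad u,\grad v) - 2\int_\Gamma (\grad u\, v)\cdot\mathbf n$, which rearranges to the assertion after dividing by $2$.

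The only genuine subtlety—and the main thing to get right—is the \emph{justification} of the integration by parts and the meaning of the boundary trace. For $u$ in $(C^1(\Omega))^d$ one needs enough regularity up to $\Gamma$ for the classical divergence theorem; for $u$ in $(H^2(\Omega))^d$ the term $\prt_i u^j$ lies in $H^1(\Omega)$ and has a well-defined trace in $H^{1/2}(\Gamma)$, $v^i$ has a trace in $H^{1/2}(\Gamma)$ (hence in $L^2(\Gamma)$), and the integration by parts is exactly of the form covered by \refL{IBP} (or the standard Green's identity for $H^2 \times H^1$), so the boundary integral is a genuine $L^1(\Gamma)$ pairing. I would invoke \refL{IBP} with $\dv u = 0$ built in to handle the bulk term cleanly and note that the hypothesis $\Gamma$ locally Lipschitz (as in \refS{Appendix}) suffices. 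Index bookkeeping aside, there is no real obstacle here; the lemma is essentially the pointwise identity $|\grad u|^2 = 2|\omega(u)|^2 + \dv(\,(\grad u)^T u - (\dv u)u\,)$ specialized to $\dv u = 0$ and integrated, and I would present it in that streamlined form.
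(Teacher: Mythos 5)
Your proof is correct and follows essentially the same route as the paper's: expand $4\,\omega(u)\cdot\omega(v) = 2\,\grad u\cdot\grad v - 2\,(\grad u)^T\cdot\grad v$ and use $\dv u = 0$ to turn the cross term into a pure divergence whose integral reduces to the boundary term (the paper justifies the integration by parts by density of smooth fields, while you invoke \refL{IBP} directly at the $H^2\times H^1$ level, a negligible difference). The only blemish is the garbled intermediate display asserting $(\prt_i u^j,\prt_j v^i) = (\grad u,\grad v) - \int_\Gamma (\grad u\, v)\cdot\mathbf{n}$, which is a slip: your own subsequent computation correctly yields $(\prt_i u^j,\prt_j v^i) = \int_\Gamma (\grad u\, v)\cdot\mathbf{n}$, and the final assembly of the identity is right.
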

\begin{proof}
Directly from \refE{VorticityRd},
\begin{align}\label{e:VorticityIdentity}
    \begin{split}
        2\omega(u) \cdot \omega(v)
        &= \frac{1}{2} (\grad u - (\grad u)^T) \cdot (\grad v - (\grad v)^T) \\
        &= \grad u \cdot \grad v - (\grad u)^T \cdot \grad v.
    \end{split}
\end{align}

    Since $\dv u = 0$, we have $(\grad u)^T \cdot \grad v = \prt_j v^i \prt_i
    u^j = \prt_j (v^i \prt_i u^j) = \dv (\grad u v)$, so if $u$ and $v$ are
    both in $(C^\iny(\Omega))^d$ with $\dv u = 0$ then
    \begin{align*}
        2 \int_\Omega &\omega(u) \cdot \omega(v)
            = \int_\Omega \grad u \cdot \grad v - (\grad u)^T \cdot \grad v \\
           &= \int_\Omega \grad u \cdot \grad v - \int_\Omega \dv (\grad u v)
            = \int_\Omega \grad u \cdot \grad v - \int_\Gamma (\grad u v)
                    \cdot \mathbf{n}.
    \end{align*}
    The result then follows by the density of $C^\iny(\Omega)$ in $H^1(\Omega)$,
    $H^2(\Omega)$, and $C^1(\Omega)$.
\end{proof}

\section*{Acknowledgements}

\noindent The author would like to thank Helena Nussenzveig Lopes, Milton Lopes Filho, and Robert Pego for valuable comments on early versions of this paper. The author was supported in part by NSF grant DMS-0705586 during the period of this work.

\IfarXivElse{

} 
{ 
\bibliography{Refs}

\begin{thebibliography}{10}

\bibitem{Adams1978}
Robert~A. Adams.
\newblock {\em Sobolev spaces}.
\newblock Academic Press [A subsidiary of Harcourt Brace Jovanovich,
  Publishers], New York-London, 1975.
\newblock Pure and Applied Mathematics, Vol. 65.

\bibitem{CDG2000}
Jason Cantarella, Dennis DeTurck, Herman Gluck, and Mikhail Teytel.
\newblock Isoperimetric problems for the helicity of vector fields and the
  {B}iot-{S}avart and curl operators.
\newblock {\em J. Math. Phys.}, 41(8):5615--5641, 2000.

\bibitem{G1994}
Giovanni~P. Galdi.
\newblock {\em An introduction to the mathematical theory of the
  {N}avier-{S}tokes equations. {V}ol. {I}}, volume~38 of {\em Springer Tracts
  in Natural Philosophy}.
\newblock Springer-Verlag, New York, 1994.

\bibitem{Kato1983}
Tosio Kato.
\newblock Remarks on zero viscosity limit for nonstationary {N}avier-{S}tokes
  flows with boundary.
\newblock In {\em Seminar on nonlinear partial differential equations
  (Berkeley, Calif., 1983)}, volume~2 of {\em Math. Sci. Res. Inst. Publ.},
  pages 85--98. Springer, New York, 1984.

\bibitem{K2006Disk}
James~P. Kelliher.
\newblock On the vanishing viscosity limit in a disk.
\newblock {\em Submitted}.

\bibitem{K2006Kato}
James~P. Kelliher.
\newblock On {K}ato's conditions for vanishing viscosity.
\newblock {\em Indiana University Mathematics Journal}, 56(4):1711--1721, 2007.

\bibitem{Koch2002}
Herbert Koch.
\newblock Transport and instability for perfect fluids.
\newblock {\em Math. Ann.}, 323(3):491--523, 2002.

\bibitem{FLMT2008}
M.~C. Lopes~Filho, H.~J. Nussenzveig~Lopes, A.~L. Mazzucato, and Michael
  Taylor.
\newblock Vanishing {V}iscosity {L}imits and {B}oundary {L}ayers for
  {C}ircularly {S}ymmetric 2{D} {F}lows.
\newblock {\em To appear in Bulletin of the Brazilian Math Society}.

\bibitem{Matsui1994}
Shin'ya Matsui.
\newblock Example of zero viscosity limit for two-dimensional nonstationary
  {N}avier-{S}tokes flows with boundary.
\newblock {\em Japan J. Indust. Appl. Math.}, 11(1):155--170, 1994.

\bibitem{T2001}
Roger Temam.
\newblock {\em Navier-{S}tokes equations}.
\newblock AMS Chelsea Publishing, Providence, RI, 2001.
\newblock Theory and numerical analysis, Reprint of the 1984 edition.

\bibitem{TW1998}
Roger Temam and Xiaoming Wang.
\newblock On the behavior of the solutions of the {N}avier-{S}tokes equations
  at vanishing viscosity.
\newblock {\em Ann. Scuola Norm. Sup. Pisa Cl. Sci. (4)}, 25(3-4):807--828
  (1998), 1997.
\newblock Dedicated to Ennio De Giorgi.

\bibitem{W2001}
Xiaoming Wang.
\newblock A {K}ato type theorem on zero viscosity limit of {N}avier-{S}tokes
  flows.
\newblock {\em Indiana Univ. Math. J.}, 50(Special Issue):223--241, 2001.
\newblock Dedicated to Professors Ciprian Foias and Roger Temam (Bloomington,
  IN, 2000).

\end{thebibliography}
\bibliographystyle{plain}
}

\end{document}